\documentclass[11pt]{article}

\usepackage[margin=1in]{geometry}
\usepackage{amsmath,amssymb,amsthm,mathtools}
\usepackage{microtype}
\usepackage{enumitem}
\usepackage{booktabs}
\usepackage{natbib}
\usepackage{xcolor}
\usepackage{hyperref}
\usepackage{cleveref}
\usepackage[ruled,vlined,linesnumbered]{algorithm2e}

\hypersetup{
  colorlinks=true,
  linkcolor=blue!60!black,
  citecolor=blue!60!black,
  urlcolor=blue!60!black
}

\newtheorem{theorem}{Theorem}[section]
\newtheorem{lemma}{Lemma}[section]
\newtheorem{proposition}{Proposition}[section]

\theoremstyle{definition}

\newcommand{\plu}{\operatorname{plu}}
\newcommand{\IV}{\operatorname{IV}}
\newcommand{\ML}{\operatorname{ML}}
\newcommand{\SC}{\operatorname{SC}}
\newcommand{\dist}{\operatorname{dist}}
\newcommand{\E}{\mathbb{E}}
\newcommand{\Prb}{\mathbb{P}}
\newcommand{\R}{\mathbb{R}}
\newcommand{\bottom}{\operatorname{bottom}}
\newcommand{\MIV}{\operatorname{MIV}}
\Crefname{algocf}{Algorithm}{Algorithms}
\Crefname{lemma}{Lemma}{Lemmas}

\title{An improved bound for the randomized metric distortion problem}
\author{Fabian Frank\footnote{E-Mail: \url{fabian_w.frank@tum.de}}\\
Technische Universität München}
\date{\today}

\begin{document}
\maketitle

\begin{abstract}
We propose a randomized social choice rule called Mixed Integrated Veto (MIV)
with metric distortion of $5/2$, improving the previous best upper bound of
$2.75271$.  MIV is the equal mixture of Maximal Lotteries and Integrated Veto, a
new rule built on the Simultaneous Veto process of Kizilkaya and Kempe.  Rather
than returning the candidate surviving longest, Integrated Veto assigns each
candidate probability proportional to its average score over the whole process.
\end{abstract}

\section{Introduction}
In the metric distortion model it is assumed that voters and candidates lie in a
common unknown metric space, and a voter's cost for a candidate is equal to
their distance. A voting rule observes only each voter's ranking of the
candidates by distance and outputs a lottery.  
Given a profile and a metric consistent with it, the distortion of a voting rule is the ratio between the expected social cost of the lottery it outputs and the cost of the best candidate in that metric. The metric distortion of the rule is the worst case over all profiles and all consistent metrics.
 
For deterministic rules the optimal metric distortion is $3$ \citep{ABEPS18,GHS20},
while for randomized rules it has remained open.
The best known lower bound is
$2.1126$ \citep{CR22}, and the best known upper bound is due to \citet{CRWW24}, who obtained $2.75271$ by randomizing between Maximal Lotteries --- the
equilibria of the zero-sum game induced by pairwise majority margins --- and
Random Dictatorship on the weighted uncovered set. 

We improve this bound to $5/2$ by using a different mixture. Instead of Random Dictatorship, we mix a maximal lottery with a new rule, \emph{Integrated Veto}, which is built on the Simultaneous Veto process of \citet{KK23}. In that process, each candidate with positive plurality score starts
with that score, and every voter continuously decrements the score of her least
preferred candidate among those still active, until no score remains. Rather
than returning the candidate surviving longest, Integrated Veto assigns each
candidate probability proportional to the average score it held over the process.
 
Our analysis rests on a certificate of \citet{CRWW24}, which bounds the metric distortion of a lottery by comparing the probability it places on candidates that a given set is collectively preferred to against how far that set is from containing a candidate ranked above everything outside it by every voter. We show that Integrated Veto satisfies the certificate well when the latter quantity is small, whereas Maximal Lotteries do so when it is large. Mixing Integrated Veto and Maximal Lotteries in equal parts guarantees a metric distortion of $5/2$, and we show that this bound is tight for the mixture.

\section{Related work}
The metric distortion framework was introduced by \citet{ABP15,ABEPS18}, who showed that no deterministic rule can guarantee
distortion below $3$ and that the uncovered set achieves $5$. 
\citet{MW19} improved the upper bound to $2+\sqrt{5}$ before \citet{GHS20} closed the gap with Plurality Matching, a
deterministic rule achieving the optimal bound of $3$.
\citet{KK22} subsequently introduced Plurality Veto, a 
simpler rule with the same guarantee. 

In the randomized setting, \citet{AP16,AP17} showed that Random Dictatorship achieves a metric distortion of $3-\tfrac2n$, which converges to $3$ as $n$ grows, and that no randomized rule achieves distortion below $2$, even on the
line.
\citet{CR22} improved this lower bound to $2.1126$. \citet{CRWW24} were the first to break the barrier of $3$ by proposing a rule that guarantees $2.75271$. Their analysis proceeds
through a biased-metric certificate which we use as
Proposition~\ref{prop:crww}. Recently, \citet{CGRW26} showed that uniformly sampling from a deterministically chosen constant-size multiset of candidates already achieves a metric distortion of $3- \varepsilon$ for some constant $\varepsilon > 0$.

\paragraph{Veto-based rules and the veto core.}
Plurality Veto initializes each candidate with its plurality score and decrements it according to voters' bottom choices among the candidates still
standing \citep{KK22}. \citet{KK23} later introduced the
generalized veto core, which extends Moulin's proportional veto core \citep{Mou81}, together with a continuous Simultaneous Veto formulation
in which all voters decrement simultaneously and candidates are eliminated as
their support is exhausted.
The construction has since been extended, for instance to the
$(p,q)$-veto core in the learning-augmented setting \citep{BFGT24} and to $k$-approval variants trading distortion against minority
protection \citep{KK25}.  Integrated Veto inherits the score dynamics of Simultaneous Veto, but differs in its output: rather than returning the final survivor, it also takes the intermediate states of the process into account.

\paragraph{Maximal Lotteries.}
Maximal Lotteries go
back to \citet{Kre65} and \citet{Fis84} and have been widely studied
in probabilistic social choice \citep{Bra17}.  \citet{CRWW24} were the first to consider them in the metric
distortion setting.

\section{Preliminaries}
Let $N$ be a finite set of voters and $C$ a finite set of candidates.  Every
voter $v\in N$ has a strict linear order $\succ_v$ over $C$, and we write
$P=(\succ_v)_{v\in N}$ for the resulting \emph{preference profile},
$\Delta(C)$ for the set of lotteries over $C$, and $I^c:=C\setminus I$ for $I\subseteq C$.  Throughout,
$\Prb_v[\,\cdot\,]$ and $\E_v[\,\cdot\,]$ denote probability and expectation
with respect to a uniformly random voter $v\in N$; we use this notation because our arguments only ever refer to the fraction of voters with a given preference, never to their number.

A \emph{pseudo-metric} on $N\cup C$ is a map
$d\colon(N\cup C)\times(N\cup C)\to\R_{\ge0}$ with $d(x,x)=0$, $d(x,y)=d(y,x)$,
and $d(x,z)\le d(x,y)+d(y,z)$ for all $x,y,z\in N\cup C$. It is \emph{consistent} with $P$, written $d\models P$, if $a\succ_v b$
implies $d(v,a)\le d(v,b)$ for all $v\in N$ and $a,b\in C$.
The \emph{social cost} of a candidate $c\in C$ under $d$ is defined as
\[
  \SC(c,d):=\E_v\bigl[d(v,c)\bigr]. 
\]
If the metric is clear from context we omit $d$ and simply write $\SC(c)$.

An ordinal randomized social choice function $f$ maps every profile $P$ to a
lottery $f(P)\in\Delta(C)$.  The \emph{metric distortion of a lottery
$p\in\Delta(C)$ on a profile $P$} is
\[
  \dist(p,P):=\sup_{d\,\models\,P}\
  \frac{\E_{c\sim p}\bigl[\SC(c,d)\bigr]}{\min_{c\in C}\SC(c,d)} ,
\]
and the \emph{metric distortion} of $f$ is
\[
  \dist(f):=\sup_{C,\,P}\ \dist\bigl(f(P),P\bigr),
\]
where the supremum ranges over all finite candidate sets $C$ and all profiles
$P$ over $C$.

The remaining notation follows \citet{CRWW24}.
Throughout the remaining part of the paper, 
$I$ denotes an arbitrary set with $ \emptyset \subsetneq I \subsetneq C$.
For candidates $i,j\in C$ we write
\[
  s_{i\succ j}:=\Prb_v[i\succ_v j\,],
  \qquad
  s_{I\succ j}:=\Prb_v[i\succ_v j\ \ \forall i\in I\,],
  \qquad
  s_{i\succ I^c}:=\Prb_v[i\succ_v j\ \ \forall j\in I^c].
\]

We extend this notation to lotteries. For $p,p'\in\Delta(C)$ let
\[
  s_{p\succ p'}:=\Prb_{a\sim p,\,b\sim p',\,v}\bigl[a\succ_v b\bigr],
\]
where, following \citet{CRWW24}, a tie $a=b$ counts as $\tfrac12$, so that
$s_{p\succ p'}+s_{p'\succ p}=1$.  We identify a candidate $c$ with the lottery
$\delta_c$ that puts probability $1$ on $c$, so that expressions such as
$s_{p\succ c}$ are defined; note that under this identification
$s_{c\succ c}=\tfrac12$, whereas $s_{c\succ c}=0$ under the definition for
candidates above.  The convention is only used for comparisons involving
lotteries.

Further, let $\plu(j):=\Prb_v[j\text{ is }v\text{'s top choice}\,]$ with
$\plu(S):=\sum_{j\in S}\plu(j)$ for $S\subseteq C$.  For $x\in\R^C_{\geq 0}$ set
\begin{equation}
  \ell_I(x):=\sum_{j\in I^c}s_{I\succ j}x(j),
  \qquad
  q(I):=\min_{i\in I}\bigl(1-s_{i\succ I^c}\bigr)\in[0,1].
  \label{eq:ellq}
\end{equation}
As for $\plu$, we write $p(S):=\sum_{j\in S}p(j)$ for $S\subseteq C$.

For a lottery $p$, the quantity $\ell_I(p)$ is the probability $p$ places outside $I$, each candidate
weighted by the fraction of voters preferring all of $I$ to it, and $q(I)$
measures how close the strongest member of $I$ comes to being ranked above all
of $I^c$ by every voter. We define $\ell_I$ on all of $\R^C_{\ge0}$ rather than only on $\Delta(C)$ because we apply it to the probability that Integrated Veto adds in a single phase, which need not sum to one.

Note that $\ell_I$ is linear: for $x,y\in\R^C_{\ge0}$ and
$\lambda,\lambda'\in\R_{\ge0}$,
\[
  \ell_I(\lambda x+\lambda'y)
  =\sum_{j\in I^c}s_{I\succ j}\bigl(\lambda x(j)+\lambda'y(j)\bigr)
  =\lambda\sum_{j\in I^c}s_{I\succ j}x(j)+\lambda'\sum_{j\in I^c}s_{I\succ j}y(j)
  =\lambda\,\ell_I(x)+\lambda'\,\ell_I(y),
\]
and the same holds for finite sums.

Finally, a \emph{maximal lottery} is a lottery $p \in\Delta(C)$ with
\[
  \sum_{i\in C}p(i)\bigl(s_{i\succ j}-s_{j\succ i}\bigr)\ \ge\ 0
  \qquad\text{for every } j\in C,
\]
that is, an equilibrium strategy of the symmetric zero-sum game with payoff
matrix $\bigl(s_{i\succ j}-s_{j\succ i}\bigr)_{i,j\in C}$.  As this matrix is
skew-symmetric, a maximal lottery exists for every profile \citep{Fis84}.  If
some $b\in C$ is a strict Condorcet winner, i.e.\ $s_{b\succ j}>\tfrac12$ for
all $j\neq b$, then the unique maximal lottery puts probability $1$ on $b$.  We write 
$\ML(P)$ for an arbitrary maximal lottery of 
$P$, fixing one for each profile; all our results hold regardless of the choice.

In order to prove our main statement we use the following result by \citet{CRWW24}.

\begin{proposition}[{\citet[Sec. 3]{CRWW24}}]
\label{prop:crww}
Let $P$ be a profile, $p\in\Delta(C)$ and $\lambda\ge0$.  If
$\ell_I(p)\le\lambda\bigl(1-s_{i^*\succ I^c}\bigr)$ for every non-empty proper
$I\subsetneq C$ and every $i^*\in I$, then $\dist(p,P)\le 1+2\lambda$. 
\end{proposition}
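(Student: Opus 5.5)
The plan is to fix an arbitrary consistent pseudo-metric $d\models P$ and an optimal candidate $c^*\in\arg\min_c \SC(c,d)$. I would then show $\E_{j\sim p}[\SC(j)]-\SC(c^*)\le 2\lambda\,\SC(c^*)$ by a layer-cake argument over balls around $c^*$. For $t\ge0$ let $I_t:=\{c\in C: d(c^*,c)\le t\}$. This set always contains $c^*$, so it is non-empty. If $I_t=C$, all terms below vanish, and otherwise $I_t$ is a valid set for the hypothesis. I would apply the hypothesis to $I_t$ with $i^*=c^*$ and then integrate over $t$.

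\emph{Right-hand side.} Take $j\in I_t^c$ and suppose voter $v$ does not rank $c^*$ above $j$. Consistency gives $d(v,j)\le d(v,c^*)$, so $t<d(c^*,j)\le d(v,c^*)+d(v,j)\le 2d(v,c^*)$. Hence
\[
1-s_{c^*\succ I_t^c}\le \Prb_v\bigl[d(v,c^*)>t/2\bigr],
\]
and therefore $\int_0^\infty (1-s_{c^*\succ I_t^c})\,dt\le 2\,\E_v[d(v,c^*)]=2\SC(c^*)$.

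\emph{Left-hand side.} Fix a candidate $j$ and a voter $v$, and define $\tau_v(j):=\inf\{t: \text{$v$ does not prefer every $i\in I_t$ to $j$}\}$. Since $I_t$ is monotone in $t$, $v$ prefers all of $I_t$ to $j$ exactly for $t<\tau_v(j)$; this uses only the right-continuity of the balls, which is a routine check. I claim that $d(v,j)-d(v,c^*)\le\min\{d(c^*,j),\tau_v(j)\}$. The first bound is the triangle inequality. For the second, take any $t>\tau_v(j)$. Some $i\in I_t$ satisfies $j\succ_v i$ (or $i=j$), so $d(v,j)\le d(v,i)\le d(v,c^*)+t$. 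Letting $t\downarrow\tau_v(j)$ gives the bound. Now
\[
\min\{d(c^*,j),\tau_v(j)\}=\int_0^\infty \mathbf 1\bigl[j\in I_t^c,\ v\text{ prefers all of }I_t\text{ to }j\bigr]\,dt .
\]
Averaging over $v$ yields $\SC(j)-\SC(c^*)\le\int_0^\infty \mathbf 1[j\in I_t^c]\,s_{I_t\succ j}\,dt$. Taking expectation over $j\sim p$ and using linearity of $\ell_I$ gives
\[
\E_{j\sim p}[\SC(j)]-\SC(c^*)\le\int_0^\infty \ell_{I_t}(p)\,dt\le\lambda\int_0^\infty\bigl(1-s_{c^*\succ I_t^c}\bigr)dt\le 2\lambda\,\SC(c^*),
\]
which is the claimed bound $\dist(p,P)\le 1+2\lambda$.

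The main obstacle is the per-voter inequality $d(v,j)-d(v,c^*)\le\tau_v(j)$, and its conversion into an integral whose integrand is exactly $\mathbf 1[j\notin I_t]\,s_{I_t\succ j}$. This step is what produces the weight $s_{I\succ j}$ in $\ell_I$, and it requires care with ties in preferences, ties in distances, and the boundary $t=\tau_v(j)$. I would handle these by working with strict/non-strict inequalities carefully, noting that they affect only a measure-zero set of $t$. I would also check the degenerate cases: $\SC(c^*)=0$, where both sides vanish because every voter is at distance $0$ from $c^*$, and $I_t=C$, where $\ell_{I_t}(p)=0$.
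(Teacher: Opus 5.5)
Your proof is correct. The two pointwise bounds are
\[
d(v,j)-d(v,c^*)\le\tau_v(j)=\min\{d(c^*,i): i=j \text{ or } j\succ_v i\},
\]
and $d(v,c^*)>t/2$ whenever $v$ ranks some $j\notin I_t$ above $c^*$. Integrated over $t$, they give exactly $\E_{j\sim p}[\SC(j)]-\SC(c^*)\le\int_0^\infty\ell_{I_t}(p)\,dt\le 2\lambda\,\SC(c^*)$.

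The paper gives no proof of its own; it imports the statement from Section~3 of \citet{CRWW24}. Their framework (cf.\ the footnote to \Cref{lem:MLboundCRWW}) works with biased metrics and integrates over their level sets $I_t=\{j:x_j\le t\}$. Your balls around $c^*$ are those level sets with $x_j=d(c^*,j)$. Your two inequalities play the role of reducing to biased metrics, without carrying out that reduction explicitly. So your argument is in effect a self-contained version of the cited one.

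You also never use the optimality of $c^*$, so you in fact obtain $\E_{j\sim p}[\SC(j)]\le(1+2\lambda)\SC(c)$ for every $c\in C$.
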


Since $q(I)=\min_{i^*\in I}\bigl(1-s_{i^*\succ I^c}\bigr)$, the hypothesis of
\Cref{prop:crww} is equivalent to $\ell_I(p)\le\lambda q(I)$ for every $I$.
We use this as follows: for Integrated Veto and for Maximal Lotteries we bound
$\ell_I$ in terms of $q(I)$ separately, and then combine the two bounds into a
single $\lambda=\tfrac{3}{4}$ that works for every $I$.

\section{Integrated Veto}

\begin{algorithm}[t]
\DontPrintSemicolon
\SetKwInOut{Input}{Input}\SetKwInOut{Output}{Output}
\Input{A profile $P$ over candidates $C$.}
\Output{A lottery $p\in\Delta(C)$.}
$k\leftarrow0$\;
$r_j^0\leftarrow\plu(j)$ and $p(j)\leftarrow0$ for all $j\in C$\;
$A_0\leftarrow\{j\in C: r_j^0>0\}$\;
\While{$A_k\neq\varnothing$}{
  $b_j^k\leftarrow\Prb_v\!\left[j=\bottom_v(A_k)\right]$ for all $j\in A_k$\;
  $\Delta_k\leftarrow\min\bigl\{\,r_j^k/b_j^k \;:\; j\in A_k,\ b_j^k>0\,\bigr\}$\;
  \ForEach{$j\in A_k$}{
    $r_j^{k+1}\leftarrow r_j^k-\Delta_k b_j^k$\;
    $p(j)\leftarrow p(j)+\Delta_k\bigl(r_j^k+r_j^{k+1}\bigr)$\;
  }
  $A_{k+1}\leftarrow\{j\in A_k: r_j^{k+1}>0\}$\;
  $k\leftarrow k+1$\;
}
$K\leftarrow k$\;
\Return $p$\;
\caption{Integrated Veto}
\label{alg:iv}
\end{algorithm}

We now analyze the Integrated Veto rule (IV), which is formally defined in
\Cref{alg:iv} and very similar to the Simultaneous Veto rule of \citet{KK23}.

For a non-empty set $A\subseteq C$ let $\bottom_v(A)$ denote voter $v$'s
least-preferred candidate in $A$, which is well defined since $\succ_v$ is a
strict linear order.  We refer to one execution of the \textbf{while} loop of
\Cref{alg:iv} as a \emph{phase} and write $K$ for the total number of phases;
for $0\le k\le K$, $r_j^k$, $b_j^k$, $\Delta_k$ and $A_k$ denote the
values these variables hold at the start of phase $k$.  Note that $r_j^k=0$ for
every $j\notin A_k$, and that $\Delta_k>0$, since $r_j^k>0$ for all $j\in A_k$.

Each candidate $j$ starts with a score $r_j^0=\plu(j)$, and the active set $A_k$
contains the candidates whose score is still positive; in particular, a
candidate who is never ranked first never becomes active and therefore receives
probability $0$.  In phase $k$, every voter vetoes her least preferred candidate
among the active candidates $A_k$, so candidate $j$ loses score at rate $b_j^k$,
the fraction of voters ranking $j$ last among $A_k$.  The phase lasts until the
first active candidate runs out of score, which happens after time $\Delta_k$;
every active candidate's score then drops by $\Delta_kb_j^k$, and every
candidate whose score has reached $0$ leaves the active set.

During the phase the score of $j$ falls linearly from $r_j^k$ to $r_j^{k+1}$, so
its average over the phase is $\tfrac12(r_j^k+r_j^{k+1})$ and the amount added
to $p(j)$ is twice this average, multiplied by the length $\Delta_k$ of the
phase.  Next, we show that $p$ is indeed a lottery.

\begin{lemma}
    \Cref{alg:iv} returns a lottery.
\end{lemma}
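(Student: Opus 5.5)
We need three things: the algorithm terminates, every $p(j)$ is nonnegative, and the entries of $p$ sum to one. The sum will come from a telescoping identity based on the fact that $\sum_{j\in A_k} b_j^k = 1$.

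\emph{Well-definedness and termination.} Whenever $A_k\neq\varnothing$, every voter has a well-defined $\bottom_v(A_k)\in A_k$, so $\sum_{j\in A_k}b_j^k=1$. Hence some $b_j^k>0$ and $\Delta_k$ is a finite positive number. By the choice of $\Delta_k$, we have $r_j^{k+1}=r_j^k-\Delta_k b_j^k\ge 0$ for all $j\in A_k$, with equality for at least one minimizer. So $A_{k+1}\subsetneq A_k$, and the loop terminates after at most $|C|$ phases. All scores stay nonnegative, so every increment $\Delta_k(r_j^k+r_j^{k+1})$ is nonnegative, and therefore $p\ge 0$.

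\emph{Summation.} Let $R_k:=\sum_{j\in A_k}r_j^k=\sum_{j\in C}r_j^k$; this equality holds because $r_j^k=0$ off $A_k$, including for candidates dropped from the active set with score exactly $0$. Then
\[
R_{k+1}=R_k-\Delta_k\sum_{j\in A_k}b_j^k=R_k-\Delta_k .
\]
So the total score decreases by exactly the phase length, i.e.\ at unit rate. The total mass added to $p$ in phase $k$ is
\[
\Delta_k\sum_{j\in A_k}\bigl(r_j^k+r_j^{k+1}\bigr)=\Delta_k(R_k+R_{k+1})=(R_k-R_{k+1})(R_k+R_{k+1})=R_k^2-R_{k+1}^2 .
\]
Summing over $k=0,\dots,K-1$ telescopes to $R_0^2-R_K^2$. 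Here $R_0=\sum_j\plu(j)=1$, since every voter has exactly one top choice, and $R_K=0$ because $A_K=\varnothing$. Hence $\sum_j p(j)=1$.

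The only point needing care is bookkeeping: the sum over $A_k$ must equal the sum over all of $C$. This is settled by the observation that inactive candidates have score zero, so candidates leaving the active set contribute nothing to the later sums.
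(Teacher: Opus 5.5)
Your proof is correct and follows the paper's route: nonnegativity of the scores from the choice of $\Delta_k$, termination because each phase deactivates a minimizer, and the telescoping identity $\Delta_k(R_k+R_{k+1})=R_k^2-R_{k+1}^2$ derived from $\sum_j b_j^k=1$. Your explicit checks that $\Delta_k$ is well defined and that inactive candidates carry zero score state carefully two points the paper passes over quickly.
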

\begin{proof}
Let $p=\IV(P)$ be the outcome of \Cref{alg:iv} on a profile $P$.  We first show
$r_j^k\ge0$ for all $j$ and $k$.  If $j\notin A_k$ or $b_j^k=0$, nothing is
subtracted in phase $k$.  If $j\in A_k$ and $b_j^k>0$, the choice of $\Delta_k$
gives $\Delta_k\le r_j^k/b_j^k$ and hence
$r_j^{k+1}=r_j^k-\Delta_kb_j^k\ge0$.  Since $r_j^0=\plu(j)\ge0$, induction on
$k$ gives the claim.  As $p(j)$ is only ever increased, by the non-negative
amount $\Delta_k(r_j^k+r_j^{k+1})$, we get $p(j)\ge0$ for every $j$.
 
Next we show that $p$ sums to $1$.  Set $S_k:=\sum_jr_j^k$.  First, observe that in each iteration the algorithm removes all candidates with a minimal $\frac{r_j^k}{b_j^k}$ ratio from the active set, thus the number of iterations $K$ is bounded by $\lvert A_0 \rvert \leq \lvert C \rvert$.  Moreover
$\sum_jb_j^k=1$, since every voter has exactly one least preferred candidate in
$A_k$ as long as $A_k\neq\varnothing$, and hence
$S_{k+1}=\sum_j\bigl(r_j^k-\Delta_kb_j^k\bigr)=S_k-\Delta_k\sum_jb_j^k=S_k-\Delta_k$.

Thus, 
\[
  \sum_j p(j)=\sum_k\Delta_k\bigl(S_k+S_{k+1}\bigr)
  =\sum_k\bigl(S_k-S_{k+1}\bigr)\bigl(S_k+S_{k+1}\bigr)
  =\sum_k\bigl(S_k^2-S_{k+1}^2\bigr)=S_0^2-S_K^2=1 ,
\]
using $S_0=\sum_j\plu(j)=1$ and $S_K=0$, since the active set is empty after $K$ phases.
\end{proof}

Next, in order to bound the metric distortion of IV we want to use \Cref{prop:crww}. For this we use the following lemma.

\begin{lemma}
\label{lem:IVbound}
Let $P$ be a profile and $p=\IV(P)$.  Then $\ell_I(p)\le\plu(I^c)^2\le q(I)^2$ 
 for every non-empty proper $I\subsetneq C$.
\end{lemma}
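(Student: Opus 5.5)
The plan has two parts: a counting argument for $\plu(I^c)\le q(I)$, and a phase-by-phase potential argument for $\ell_I(p)\le\plu(I^c)^2$.

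\emph{The inequality $\plu(I^c)^2\le q(I)^2$.} It suffices to show $\plu(I^c)\le 1-s_{i\succ I^c}$ for every $i\in I$. Any voter whose top choice lies in $I^c$ does not rank $i$ above all of $I^c$. Hence $s_{i\succ I^c}\le 1-\plu(I^c)$. Taking the minimum over $i\in I$ gives $\plu(I^c)\le q(I)$, and squaring finishes this part since both sides are nonnegative.

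\emph{The inequality $\ell_I(p)\le\plu(I^c)^2$.} I would mirror the telescoping argument from the proof that \Cref{alg:iv} returns a lottery, but restricted to $I^c$. Define the potential $T_k:=\sum_{j\in I^c}r_j^k$. Then $T_0=\plu(I^c)$, $T_K=0$, and
\[
T_k-T_{k+1}=\Delta_k\sum_{j\in I^c\cap A_k}b_j^k .
\]
By linearity of $\ell_I$, the contribution of phase $k$ is
\[
\Delta_k\sum_{j\in I^c\cap A_k}s_{I\succ j}\bigl(r_j^k+r_j^{k+1}\bigr),
\]
because inactive candidates have $r_j^k=r_j^{k+1}=0$. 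The goal is to bound this per-phase contribution by $T_k^2-T_{k+1}^2$; summing over $k$ then telescopes to $T_0^2=\plu(I^c)^2$.

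\emph{Key step: the veto mass on $I^c$.} Set $m_k:=\max_{j\in I^c\cap A_k}s_{I\succ j}$. If $I^c\cap A_k$ is empty, the phase contributes nothing. I claim that
\[
\sum_{j\in I^c\cap A_k}b_j^k\ \ge\ m_k .
\]
Fix an active $j\in I^c$ and a voter $v$ who ranks all of $I$ above $j$. Her bottom choice in $A_k$ is either $j$ or something she ranks below $j$. In either case it is not in $I$, so it lies in $I^c\cap A_k$. Thus at least an $s_{I\succ j}$ fraction of voters veto inside $I^c\cap A_k$.

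With this claim the phase contribution is at most
\[
\Delta_k m_k\,(T_k+T_{k+1})\ \le\ (T_k-T_{k+1})(T_k+T_{k+1})=T_k^2-T_{k+1}^2 ,
\]
using that the $r$-values are nonnegative and sum to $T_k$ (respectively $T_{k+1}$) over $I^c$. This veto-mass observation is the main obstacle: the weights $s_{I\succ j}$ must be dominated by the rate at which $I^c$ as a whole loses score. Once it is in place, the rest is the same telescoping as in the lottery lemma.
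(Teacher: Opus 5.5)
Your proposal is correct and matches the paper's proof essentially step for step. Both use the same potential $\sum_{j\in I^c} r_j^k$ and the same key observation: every voter counted by $s_{I\succ j}$ for an active $j\in I^c$ vetoes inside $A_k\cap I^c$ (you package this through the maximum $m_k$, while the paper states it for each $j$ separately). The telescoping to $\plu(I^c)^2$ and the counting argument for $\plu(I^c)\le q(I)$ are also the same as in the paper.
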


\begin{proof}
Fix $I$ and set $R_k:=\sum_{j\in I^c}r_j^k$, $B_k:=\sum_{j\in I^c}b_j^k$.
Then, since $r_j^{k+1}= r_j^{k}-\Delta_kb_j^k $ we get that
$R_k-R_{k+1}=\Delta_kB_k$.
Next, we claim that $
  s_{I\succ j}\le B_k
  \text{ for all }j\in A_k\cap I^c .
$
For this observe that
if $j\in A_k\cap I^c$ and $v$ is counted by $s_{I\succ j}$, then every active
member of $I$ is ranked by $v$ above the active candidate $j$, so
$\bottom_v(A_k)\notin I$. 
Hence $\bottom_v(A_k)\in A_k\cap I^c$, so every voter counted by $s_{I\succ j}$
contributes to $\sum_{j'\in I^c}b_{j'}^k=B_k$.

Candidates outside $A_k$ satisfy $r_j^k=r_j^{k+1}=0$ and hence contribute
nothing to this sum, and the factors $r_j^k+r_j^{k+1}$ are non-negative. Therefore the contribution of phase
$k$ to $\ell_I(p)$ is at most
\[
  \sum_{j\in I^c}s_{I\succ j}\,\Delta_k\bigl(r_j^k+r_j^{k+1}\bigr)
  \ \le\ \Delta_kB_k\bigl(R_k+R_{k+1}\bigr)
  =\bigl(R_k-R_{k+1}\bigr)\bigl(R_k+R_{k+1}\bigr)
  =R_k^2-R_{k+1}^2 .
\]

Since $p(j)=\sum_k\Delta_k\bigl(r_j^k+r_j^{k+1}\bigr)$ by \Cref{alg:iv} and
$\ell_I$ is linear, summing over all phases gives
\[
  \ell_I(p)\ \le\ \sum_k\bigl(R_k^2-R_{k+1}^2\bigr)
  \ =\ R_0^2-R_K^2\ =\ \plu(I^c)^2 ,
\]
using $R_K=0$ and $R_0=\plu(I^c)$.

Finally, we claim that 
$\plu(I^c)\le q(I)$. 
Fix $i \in I$.
A voter whose top choice lies in $I^c$ ranks some candidate of
$I^c$ above $i$, so $\plu(I^c)\le1-s_{i\succ I^c}$.
Minimizing over $i\in I$ proves the claim.
\end{proof}

\section{Mixed Integrated Veto}
 
Next, we define the \emph{Mixed Integrated Veto rule} (MIV) as
$\MIV(P)=\tfrac12\bigl(\IV(P)+\ML(P)\bigr)$.
 
For a lottery $p\in\Delta(C)$ and $X\subseteq C$ with $p(X)>0$, let $p_{\mid X}$
denote $p$ conditioned on $X$, that is
\[
  p_{\mid X}(c):=
  \begin{cases}
    p(c)/p(X), & c\in X,\\
    0, & c\notin X .
  \end{cases}
\]
In particular $p=p(X)\,p_{\mid X}+p(X^c)\,p_{\mid X^c}$ whenever both parts are
defined.
 
In order to analyze the metric distortion of MIV we use the following result from
\citet{CRWW24}.
 
\begin{lemma}[Theorem~1 of \citealp{CRWW24}\footnote{\citet{CRWW24} state this for the level sets $I_t=\{j:x_j\le t\}$ of a biased
metric, but every $I$ arises as such a level set: choosing any $i^*\in I$ as the
distinguished candidate and setting $x_i=0$ for $i\in I$ and $x_j=1$ for
$j\notin I$ gives $I_t=I$ for all $t\in(0,1)$.}]
\label{lem:MLboundCRWW}
Let $p$ be a maximal lottery. Then for every non-empty proper $I\subsetneq C$ it holds that $\ell_I(p)\le\tfrac12\,p(I^c)$.
\end{lemma}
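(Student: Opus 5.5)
The plan is to derive the bound directly from the equilibrium property of $p$, by pitting $p$ against the lottery obtained by conditioning $p$ on $I$. Write $\alpha:=p(I)$ and $\beta:=p(I^c)=1-\alpha$. If $\beta=0$ the left-hand side $\ell_I(p)$ is an empty-weighted sum, so it is $0$ and there is nothing to show. Hence assume $\beta>0$.

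\emph{Step 1: reformulate maximality.} The defining inequality $\sum_i p(i)(s_{i\succ j}-s_{j\succ i})\ge 0$ says $s_{p\succ j}\ge s_{j\succ p}$ for every $j\in C$. The term $i=j$ contributes $0$ in both conventions, so under the tie-$\tfrac12$ convention this reads $s_{j\succ p}\le\tfrac12$. Averaging over $j\sim p'$ then gives $s_{p'\succ p}\le\tfrac12$ for every lottery $p'$.

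\emph{Step 2: the pointwise comparison.} For $j\in I^c$ and any $i\in I$, every voter ranking all of $I$ above $j$ in particular ranks $i$ above $j$. Hence $s_{I\succ j}\le s_{i\succ j}$. Averaging over $i\sim p'$ for any lottery $p'$ supported on $I$ gives $s_{I\succ j}\le s_{p'\succ j}$. Summing against $p(j)$ over $j\in I^c$ yields
\[
\ell_I(p)\le \beta\, s_{p'\succ p_{\mid I^c}}.
\]

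\emph{Step 3: choose $p'$.} If $\alpha=0$, take $p'=\delta_i$ for any $i\in I$. Then $p=p_{\mid I^c}$ and $\beta=1$, so Step 1 gives $\ell_I(p)\le s_{i\succ p}\le\tfrac12=\tfrac12\beta$.

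If $\alpha>0$, take $p'=p_{\mid I}$ and decompose $p=\alpha p_{\mid I}+\beta p_{\mid I^c}$. Bilinearity of $s$ gives
\[
s_{p_{\mid I}\succ p}=\alpha\, s_{p_{\mid I}\succ p_{\mid I}}+\beta\, s_{p_{\mid I}\succ p_{\mid I^c}} .
\]
Self-comparison gives $s_{p_{\mid I}\succ p_{\mid I}}=\tfrac12$. Combining with Step 1, we get $\tfrac{\alpha}{2}+\beta s_{p_{\mid I}\succ p_{\mid I^c}}\le\tfrac12$, that is, $\beta s_{p_{\mid I}\succ p_{\mid I^c}}\le\tfrac{\beta}{2}$. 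Plugging this into Step 2 gives $\ell_I(p)\le\tfrac12 p(I^c)$.

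The main care point is the bookkeeping of the tie convention. The self-comparison $s_{q\succ q}=\tfrac12$ follows from $s_{q\succ q}+s_{q\succ q}=1$, and Step 2 is only applied with $j\in I^c$ and $i\in I$, so no ties arise there. Beyond that, the argument should be routine.
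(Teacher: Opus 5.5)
Your proof is correct. Steps 1 to 3 check out: the reformulation $s_{j\succ p}\le\tfrac12$ under the tie convention, bilinearity of $s$, the self-comparison $s_{p_{\mid I}\succ p_{\mid I}}=\tfrac12$, and the degenerate cases $p(I^c)=0$ and $p(I)=0$, where $p_{\mid I}$ is undefined.

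Your route differs from the paper's, which gives no argument of its own. The paper imports Theorem~1 of \citet{CRWW24}, which is stated for the level sets of a biased metric. A footnote then checks that every $I$ is such a level set, using the degenerate choice $x_i=0$ on $I$ and $x_j=1$ off $I$.

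You instead argue directly from the equilibrium condition. The pointwise bound $s_{I\succ j}\le s_{i\succ j}$ reduces the claim to $\ell_I(p)\le p(I^c)\,s_{p_{\mid I}\succ p_{\mid I^c}}$. Then the decomposition of $s_{p_{\mid I}\succ p}\le\tfrac12$ gives $s_{p_{\mid I}\succ p_{\mid I^c}}\le\tfrac12$. This is exactly Claim~1 of \citet{CRWW24}, i.e.\ \Cref{lem:spropsTwo}, which the paper imports separately and uses in \Cref{lem:MLbound}.

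Your approach avoids the detour through biased metrics and the need to match hypotheses of a more general theorem. It shows that this lemma is a two-line corollary of \Cref{lem:spropsTwo}, and it proves \Cref{lem:spropsTwo} itself in a few lines. The paper could therefore be made self-contained on the Maximal Lotteries side at essentially no cost. The paper's citation buys brevity and ties the statement to the framework where it was originally established, which covers more than the single-set inequality needed here.
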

 
We further record two properties of $s$, both of which appear as Claims 1 and~2
of \citet{CRWW24}.
 
\begin{lemma}[Claim 1 of \citealp{CRWW24}]
\label{lem:spropsTwo}
If $p$ is a maximal lottery with $p(I),p(I^c)>0$, then
$s_{p_{\mid I}\succ p_{\mid I^c}}\le\tfrac12$.
\end{lemma}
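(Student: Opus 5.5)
We must prove that if $p$ is a maximal lottery with $p(I),p(I^c)>0$, then $s_{p_{\mid I}\succ p_{\mid I^c}}\le\tfrac12$. The plan is to write the maximal-lottery condition in terms of lottery comparisons, decompose $p$ into its parts on $I$ and $I^c$, and use that the two "self-comparison" terms are exactly $\tfrac12$.

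\emph{Step 1 (restating the condition).} For a candidate $j$, the defining inequality $\sum_i p(i)(s_{i\succ j}-s_{j\succ i})\ge0$ reads $s_{p\succ j}-s_{j\succ p}\ge 0$ in the lottery notation. The tie $i=j$ contributes $0$ under both conventions, so they agree here. Using $s_{p\succ j}+s_{j\succ p}=1$, this becomes $s_{j\succ p}\le\tfrac12$ for every $j$. Averaging over $j\sim p'$, and using that $s_{p'\succ p}$ is bilinear in the two lotteries, we get
\[
s_{p'\succ p}\le\tfrac12 \qquad\text{for every } p'\in\Delta(C).
\]

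\emph{Step 2 (decomposition).} Take $p'=p_{\mid I^c}$, write $\alpha:=p(I)$, and use $p=\alpha p_{\mid I}+(1-\alpha)p_{\mid I^c}$. Bilinearity gives
\[
s_{p_{\mid I^c}\succ p}=\alpha\, s_{p_{\mid I^c}\succ p_{\mid I}}+(1-\alpha)\, s_{p_{\mid I^c}\succ p_{\mid I^c}}.
\]
The self-comparison term equals $\tfrac12$: by symmetry $s_{q\succ q}=s_{q\succ q}$ reversed, and the two orientations sum to $1$.

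\emph{Step 3 (conclusion).} Combining Steps 1 and 2 gives $\alpha\, s_{p_{\mid I^c}\succ p_{\mid I}}+(1-\alpha)\tfrac12\le\tfrac12$. Since $\alpha>0$, this yields $s_{p_{\mid I^c}\succ p_{\mid I}}\le\tfrac12$. Applying complementarity, $s_{p_{\mid I}\succ p_{\mid I^c}}\ge\tfrac12$. This is the wrong direction.

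Step 3 has the wrong sign, so Step 2 should use $p'=p_{\mid I}$ instead. The same computation then gives $s_{p_{\mid I}\succ p}\le\tfrac12$, hence $(1-\alpha)\, s_{p_{\mid I}\succ p_{\mid I^c}}+\alpha\cdot\tfrac12\le\tfrac12$. Since $1-\alpha=p(I^c)>0$, we can divide by it and obtain $s_{p_{\mid I}\succ p_{\mid I^c}}\le\tfrac12$, as claimed.

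\emph{Main obstacle.} The key point is the convention bookkeeping: one must check that bilinearity holds with the tie convention, and that $s_{q\succ q}=\tfrac12$. Both follow because ties contribute symmetric $\tfrac12$ weight, and because for disjointly supported lotteries no ties occur at all.
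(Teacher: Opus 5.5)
Your proof is correct. The paper gives no argument here and simply imports Claim~1 from \citet{CRWW24}; your derivation is the natural self-contained proof: first $s_{j\succ p}\le\tfrac12$ for every $j$, extended bilinearly to $s_{p_{\mid I}\succ p}\le\tfrac12$, then split over $I$ and $I^c$ using $s_{q\succ q}=\tfrac12$. One correction to your self-assessment: the first attempt with $p'=p_{\mid I^c}$ was not a sign error but a valid proof of the reverse inequality (using $p(I)>0$), so the two computations together show that in fact $s_{p_{\mid I}\succ p_{\mid I^c}}=\tfrac12$.
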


\begin{lemma}[Claim 2 of \citealp{CRWW24}]
\label{lem:sprops}
For all $p,p',p''\in\Delta(C)$ we have $s_{p\succ p''}\le s_{p\succ p'}+s_{p'\succ p''}$.
\end{lemma}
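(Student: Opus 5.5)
The plan is to reduce the inequality to a pointwise statement about a single voter and three fixed candidates, and then average. For a voter $v$ and candidates $a,c\in C$ I define
\[
  t_v(a,c):=\begin{cases}1,& a\succ_v c,\\ \tfrac12,& a=c,\\ 0,& c\succ_v a,\end{cases}
\]
so that, by the tie convention of \citet{CRWW24}, $s_{p\succ p''}=\E_{a\sim p,\,c\sim p'',\,v}\bigl[t_v(a,c)\bigr]$ with $a$, $c$ and $v$ drawn independently. I then draw an additional candidate $b\sim p'$, independent of everything else. Because $b$ does not appear in $t_v(a,c)$, this leaves the left-hand side unchanged. With $b$ in place, the right-hand side equals
\[
\E_{a,b,c,v}\bigl[t_v(a,b)+t_v(b,c)\bigr],
\]
since the marginal of $(a,b,v)$ is $p\otimes p'\otimes\mathrm{Unif}(N)$ and likewise for $(b,c,v)$. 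It therefore suffices to prove the pointwise inequality
\[
  t_v(a,c)\le t_v(a,b)+t_v(b,c)
\]
for every voter $v$ and all $a,b,c\in C$. Taking expectations then gives the lemma.

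The pointwise inequality is a short case analysis on the order $\succ_v$.
\begin{itemize}
\item If $c\succ_v a$, the left-hand side is $0$ and there is nothing to show.
\item If $a=c$, the right-hand side is $t_v(a,b)+t_v(b,a)=1\ge\tfrac12$. This holds whether or not $b=a$, since for $b=a$ both terms are $\tfrac12$.
\item If $a\succ_v c$, the left-hand side is $1$, and I split on the position of $b$:
\begin{itemize}
\item if $a\succ_v b$, then $t_v(a,b)=1$;
\item if $b\succ_v a$, then $b\succ_v c$ by transitivity, so $t_v(b,c)=1$;
\item if $b=a$, then $t_v(b,c)=1$;
\item if $b=c$, then $t_v(a,b)=1$.
\end{itemize}
\end{itemize}
In every case the right-hand side is at least $1$.

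No step is a real obstacle. The only point needing care is the tie convention: the diagonal case $a=c$ must be checked against the $\tfrac12$ weight, and one must note that introducing the auxiliary independent draw $b\sim p'$ is legitimate because all three lotteries are sampled independently in the definition of $s_{\cdot\succ\cdot}$.
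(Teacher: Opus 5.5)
Your proof is correct. Drawing an auxiliary independent $b\sim p'$, using linearity of expectation, and checking $t_v(a,c)\le t_v(a,b)+t_v(b,c)$ pointwise proves the inequality. Your case analysis is complete, including the diagonal $a=c$ and the cases $b\in\{a,c\}$ where the $\tfrac12$ tie weight matters.

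The paper gives no proof of its own and simply imports the statement as Claim 2 of \citet{CRWW24}. Your argument is therefore a self-contained verification of that cited fact.
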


Combining these, we get the following bound.
 
\begin{lemma}
\label{lem:MLbound}
Let $p$ be a maximal lottery.  Then for every non-empty proper $I\subsetneq C$,
\[
  \ell_I(p)\le\min\bigl\{q(I),\ \tfrac12\,p(I^c)\bigr\}
  \ \le\ \min\bigl\{q(I),\tfrac12\bigr\}.
\]
\end{lemma}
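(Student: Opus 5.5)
The inequality $\ell_I(p)\le\tfrac12\,p(I^c)$ is exactly \Cref{lem:MLboundCRWW}. The trailing inequality $\min\{q(I),\tfrac12 p(I^c)\}\le\min\{q(I),\tfrac12\}$ is immediate from $p(I^c)\le 1$. So the real content is the bound $\ell_I(p)\le q(I)$. Since $q(I)=\min_{i^*\in I}(1-s_{i^*\succ I^c})$, I will fix an arbitrary $i^*\in I$ and aim to show $\ell_I(p)\le 1-s_{i^*\succ I^c}$.

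\textbf{Step 1: easy cases and setup.} If $p(I^c)=0$, then $\ell_I(p)=0$ and there is nothing to prove. If $s_{i^*\succ I^c}\le\tfrac12$, then \Cref{lem:MLboundCRWW} already gives $\ell_I(p)\le\tfrac12\le 1-s_{i^*\succ I^c}$. So I may assume $p(I^c)>0$ and $s:=s_{i^*\succ I^c}>\tfrac12$. Write $p'=p_{\mid I^c}$. I will use the pointwise bound $s_{I\succ j}\le s_{i^*\succ j}$ to get
\[
  \ell_I(p)=p(I^c)\sum_{j\in I^c}p'(j)\,s_{I\succ j}=p(I^c)\,s_{I\succ p'}\le p(I^c)\,s_{i^*\succ p'} ,
\]
where $s_{I\succ p'}:=\sum_{j\in I^c}p'(j)\,s_{I\succ j}$ is just notation.

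\textbf{Step 2: use maximality to control $p(I^c)$.} The maximal-lottery condition tested against $j=i^*$ says $s_{i^*\succ p}\le\tfrac12$. Decomposing $p=p(I)\,p_{\mid I}+p(I^c)\,p'$ gives
\[
  p(I)\,s_{i^*\succ p_{\mid I}}+p(I^c)\,s_{i^*\succ p'}\le\tfrac12 .
\]
Every $j\in I^c$ satisfies $s_{i^*\succ j}\ge s$, so $s_{i^*\succ p'}\ge s$. This caps $p(I^c)$, and hence $\ell_I(p)$, by something of order $1/(2s)$.

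\textbf{Step 3 (main obstacle): reach $1-s$.} The naive combination $\ell_I(p)\le\tfrac12 p(I^c)\le 1/(4s)$ is not enough. So the crude bound $s_{I\succ j}\le s_{i^*\succ j}$ must be replaced by one that feels how often $i^*$ loses to $I^c$. Because every voter counted by $s_{I\succ j}$ ranks all of $I$ above $j$, I have $s_{I\succ j}\le 1-s_{j\succ i}$ for every $i\in I$. Averaging over $i\sim p_{\mid I}$ gives
\[
  s_{I\succ p'}\le 1-s_{p'\succ p_{\mid I}}=s_{p_{\mid I}\succ p'} ,
\]
and \Cref{lem:spropsTwo} bounds this by $\tfrac12$. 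The remaining task is to couple this with $\delta_{i^*}$ through \Cref{lem:sprops}:
\[
  s_{p_{\mid I}\succ p'}\le s_{p_{\mid I}\succ i^*}+s_{i^*\succ p'} .
\]
I would then weight this by $p(I^c)$ and combine it with the inequality from Step 2. I expect the $p(I)\,s_{i^*\succ p_{\mid I}}$ terms to cancel against $p(I^c)\,s_{p_{\mid I}\succ i^*}$ after using $s_{i^*\succ p_{\mid I}}+s_{p_{\mid I}\succ i^*}=1$.

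What I would try first is a convex combination of the two bounds on $\ell_I(p)$: Step 1's $p(I^c)\,s_{i^*\succ p'}$ and this step's $p(I^c)\,s_{p_{\mid I}\succ p'}$. The weights should be chosen so that, after substituting the maximality inequality, the right-hand side is at most $1-s$. If the cancellation does not close, the fallback is to also test the maximal-lottery condition against each $j\in I^c$ and sum with weights $p'(j)$.
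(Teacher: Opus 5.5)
Your proposal has a genuine gap: Step 3 does not close, and the mechanism you sketch cannot close it. Every relation you assemble is satisfied by $s=s_{i^*\succ p'}=0.9$, $s_{i^*\succ p_{\mid I}}=0$, $p(I^c)=5/9$, $s_{p_{\mid I}\succ p'}=s_{I\succ p'}=\tfrac12$, yet these values give $\ell_I(p)=5/18>1/10=1-s$. The relations in question are: Step 2's $p(I)\,s_{i^*\succ p_{\mid I}}+p(I^c)\,s_{i^*\succ p'}\le\tfrac12$ with $s_{i^*\succ p'}\ge s$; the bounds $s_{I\succ p'}\le s_{i^*\succ p'}$ and $s_{I\succ p'}\le s_{p_{\mid I}\succ p'}\le\tfrac12$; your triangle inequality $s_{p_{\mid I}\succ p'}\le s_{p_{\mid I}\succ i^*}+s_{i^*\succ p'}$; and the fallback, which only yields $s_{p\succ p'}\ge\tfrac12$, i.e.\ $s_{p_{\mid I}\succ p'}\ge\tfrac12$.

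The obstruction is a sign mismatch. In Step 2, $s_{i^*\succ p_{\mid I}}$ sits on the small side, so capping $p(I^c)$ needs a \emph{lower} bound on $s_{i^*\succ p_{\mid I}}$, equivalently an upper bound on $s_{p_{\mid I}\succ i^*}$. Your triangle inequality routes through $i^*$ and puts $s_{p_{\mid I}\succ i^*}$ on the large side, so it only bounds that quantity from below. No nonnegative combination cancels it, and the coefficients $p(I)$ and $p(I^c)$ would not match anyway. Refining $s_{I\succ j}$ also attacks the wrong factor. The slack is in the cap on $p(I^c)$, where you discarded the term $p(I)\,s_{i^*\succ p_{\mid I}}$.

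The missing step is to apply \Cref{lem:sprops} with $p'$ in the middle rather than $i^*$: $s_{p_{\mid I}\succ i^*}\le s_{p_{\mid I}\succ p'}+s_{p'\succ i^*}\le\tfrac12+(1-s)$. This uses \Cref{lem:spropsTwo} and $s_{p'\succ i^*}=1-s_{i^*\succ p'}\le 1-s$. Hence $s_{i^*\succ p_{\mid I}}\ge s-\tfrac12$, and Step 2 becomes $p(I)(s-\tfrac12)+p(I^c)\,s\le\tfrac12$, i.e.\ $\tfrac12\,p(I^c)\le 1-s$. Now \Cref{lem:MLboundCRWW} gives $\ell_I(p)\le\tfrac12\,p(I^c)\le 1-s$ directly, with no improvement on $s_{I\succ j}$ needed.

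One more point you skipped: under your assumption $s>\tfrac12$ you automatically have $p(I)>0$, which makes $p_{\mid I}$ well defined. Otherwise $s\le s_{i^*\succ p}\le\tfrac12$, a contradiction. The corrected argument is precisely the paper's proof, which shows $\tfrac12\,p(I^c)\le 1-s_{i^*\succ I^c}$ for every $i^*\in I$.
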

 
\begin{proof}
Put $q_i:=1-s_{i\succ I^c}$ for $i\in I$, so that $q(I)=\min_{i\in I}q_i$.  By
\Cref{lem:MLboundCRWW} it suffices to show $\tfrac12 p(I^c)\le q_{i^*}$ for an
arbitrary $i^*\in I$, which we now fix.

If $p(I^c)=0$ this is immediate.  If $p(I)=0$, then $p(I^c)=1$, so
$s_{i^*\succ I^c}\le\min_{j\in I^c}s_{i^*\succ j}\le s_{i^*\succ p}\le\tfrac12$,
where the last step uses that $p$ is a maximal lottery; hence
$q_{i^*}\ge\tfrac12=\tfrac12\,p(I^c)$.

Assume now $p(I),p(I^c)>0$.
From $s_{i^*\succ I^c}\le\min_{j\in I^c}s_{i^*\succ j}\le s_{i^*\succ p_{\mid I^c}}$ we get
$s_{p_{\mid I^c}\succ i^*}\le q_{i^*}$, and \Cref{lem:spropsTwo} and \Cref{lem:sprops} give
$s_{p_{\mid I}\succ i^*}\le s_{p_{\mid I} \succ p_{\mid I^c}}+s_{p_{\mid I^c}\succ i^*}\le\tfrac12+q_{i^*}$.
Decomposing $p$ over $I$ and $I^c$ and using that $p$ is a maximal lottery,
\[
  \tfrac12\le s_{p\succ i^*}=p(I^c)\,s_{p_{\mid I^c}\succ i^*}+p(I)\,s_{p_{\mid I} \succ i^*}
  \le p(I^c)\,q_{i^*}+p(I)\bigl(\tfrac12+q_{i^*}\bigr)
  =\tfrac12+q_{i^*}-\tfrac{p(I^c)}{2},
\]
so $\tfrac12\,p(I^c)\le q_{i^*}$.  The second inequality of the statement
follows from $p(I^c)\le1$.
\end{proof}
 
We can now combine \Cref{lem:IVbound} and \Cref{lem:MLbound} to bound the metric
distortion of MIV.

\begin{theorem}
\label{thm:main}
MIV has a metric distortion of at most $5/2$.
\end{theorem}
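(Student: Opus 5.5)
We apply \Cref{prop:crww} to the lottery $p=\MIV(P)=\tfrac12\bigl(\IV(P)+\ML(P)\bigr)$ with $\lambda=\tfrac34$. This gives $\dist(p,P)\le 1+2\cdot\tfrac34=\tfrac52$. As noted after \Cref{prop:crww}, the hypothesis is equivalent to showing $\ell_I(p)\le\tfrac34\,q(I)$ for every non-empty proper $I\subsetneq C$. So we fix such an $I$ and write $q:=q(I)\in[0,1]$.

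By linearity of $\ell_I$,
\[
  \ell_I(p)=\tfrac12\,\ell_I(\IV(P))+\tfrac12\,\ell_I(\ML(P)).
\]
\Cref{lem:IVbound} bounds the first term by $q^2$. \Cref{lem:MLbound} bounds the second by $\min\{q,\tfrac12\}$. Hence
\[
  \ell_I(p)\le\tfrac12 q^2+\tfrac12\min\{q,\tfrac12\}.
\]
It remains to check the one-variable inequality $\tfrac12 q^2+\tfrac12\min\{q,\tfrac12\}\le\tfrac34 q$ for $q\in[0,1]$. We split into two cases.
\begin{enumerate}[label=(\roman*)]
  \item If $q\le\tfrac12$, the left side is $\tfrac12 q^2+\tfrac12 q$. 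This is at most $\tfrac34 q$ if and only if $q^2\le\tfrac12 q$, i.e.\ $q\le\tfrac12$, which holds.
  \item If $q\ge\tfrac12$, the left side is $\tfrac12 q^2+\tfrac14$. This is at most $\tfrac34 q$ if and only if $2q^2-3q+1\le0$, i.e.\ $(2q-1)(q-1)\le0$, which holds for $q\in[\tfrac12,1]$.
\end{enumerate}
The case $q=0$ is covered by (i): both bounds vanish, so $\ell_I(p)=0$.

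There is no real obstacle, since all the work sits in \Cref{lem:IVbound} and \Cref{lem:MLbound}. The only point needing care is matching the two regimes correctly. The quadratic IV bound is strong for small $q$, and the ML cap $\tfrac12$ is strong for large $q$. The mixing weight $\tfrac12$ together with $\lambda=\tfrac34$ makes the inequality tight at both $q=\tfrac12$ and $q=1$, which is consistent with the claimed tightness.
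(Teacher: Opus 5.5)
Your proposal is correct and follows essentially the same argument as the paper's proof. Both use linearity of $\ell_I$, the bounds $q(I)^2$ from \Cref{lem:IVbound} and $\min\{q(I),\tfrac12\}$ from \Cref{lem:MLbound}, the same two-case check at $q(I)=\tfrac12$, and \Cref{prop:crww} with $\lambda=\tfrac34$.
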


\begin{proof}
Consider any profile $P$, set $p=\IV(P)$ and $p'=\ML(P)$, and fix $I$.  By
linearity of $\ell_I$ and \Cref{lem:IVbound} and \Cref{lem:MLbound},
\[
  \ell_I\bigl(\tfrac12 p+\tfrac12 p'\bigr)
  \le\tfrac12 q(I)^2+\tfrac12\min\bigl\{q(I),\tfrac12\bigr\}.
\]
We distinguish two cases.  If $0\le q(I)\le\tfrac12$, the right-hand side is at
most $\tfrac12 q(I)^2+\tfrac12 q(I)\le\tfrac34 q(I)$; if
$\tfrac{1}{2}\le q(I)\le1$, it is at most
$\tfrac{1}{2} q(I)^2+\tfrac{1}{4}\le\tfrac{3}{4} q(I)$.

Since $I$ was arbitrary, \Cref{prop:crww} with $\lambda=\tfrac34$ gives
$\dist(\MIV(P),P)\le 1+2\cdot\tfrac{3}{4}=5/2$, and since $P$ was arbitrary,
$\dist(\MIV)\le 5/2$.
\end{proof}

Finally, we also show that the bound is tight by providing a family of profiles with consistent metrics for which the social cost ratio between the lottery returned by $\MIV$ and an optimal candidate converges to $5/2$.

\begin{theorem}
\label{thm:other}
$\MIV$ has a metric distortion of at least $5/2$.
\end{theorem}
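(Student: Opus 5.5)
The plan is to exhibit an explicit family of two-candidate profiles, each with a consistent metric, on which the distortion of $\MIV$ tends to $5/2$. The tension to exploit is that Maximal Lotteries follow a bare majority deterministically, while Integrated Veto splits almost evenly between two near-balanced camps. Both then place substantial weight on a candidate that is far worse than the optimum.

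\emph{Profile.} Take $C=\{a,b\}$ and $\varepsilon\in(0,\tfrac12)$. A fraction $\tfrac12+\varepsilon$ of voters has ranking $a\succ b$, and the remaining fraction $\tfrac12-\varepsilon$ has $b\succ a$. Since $s_{a\succ b}>\tfrac12$, $a$ is a strict Condorcet winner, so $\ML(P)=\delta_a$ for every choice of maximal lottery.

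\emph{Running \Cref{alg:iv}.} Initially $r_a^0=\tfrac12+\varepsilon$ and $r_b^0=\tfrac12-\varepsilon$. In phase $0$ each voter vetoes her bottom candidate, so $b_a^0=\tfrac12-\varepsilon$ and $b_b^0=\tfrac12+\varepsilon$. Then
\[
r_b^0/b_b^0=\tfrac{1/2-\varepsilon}{1/2+\varepsilon}<1<r_a^0/b_a^0 ,
\]
so $\Delta_0=\tfrac{1/2-\varepsilon}{1/2+\varepsilon}$ and $b$ is eliminated with $r_b^1=0$. Hence
\[
\IV(P)(b)=\Delta_0\,r_b^0=\tfrac{(1/2-\varepsilon)^2}{1/2+\varepsilon}\to\tfrac12
\]
as $\varepsilon\to0$, and $\IV(P)(a)=1-\IV(P)(b)$. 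It follows that $\MIV(P)(a)\to\tfrac34$ and $\MIV(P)(b)\to\tfrac14$.

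\emph{Metric.} Put $a$ and $b$ at distance $2$. Place all $b\succ a$ voters at $b$, and all $a\succ b$ voters at the midpoint, at distance $1$ from each candidate. This is a valid (pseudo-)metric on a path, and it is consistent with $P$ because consistency only requires $d(v,a)\le d(v,b)$, which allows ties. Then
\[
\SC(b)=\bigl(\tfrac12+\varepsilon\bigr)\cdot 1 ,\qquad
\SC(a)=\bigl(\tfrac12+\varepsilon\bigr)\cdot 1+\bigl(\tfrac12-\varepsilon\bigr)\cdot 2 ,
\]
so $\SC(a)/\SC(b)\to3$. The distortion of $\MIV(P)$ is therefore at least
\[
\MIV(P)(a)\cdot\tfrac{\SC(a)}{\SC(b)}+\MIV(P)(b)\ \longrightarrow\ \tfrac34\cdot3+\tfrac14=\tfrac52 .
\]
Taking the supremum over $\varepsilon$ gives $\dist(\MIV)\ge5/2$.

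\emph{Main obstacle.} The step needing care is the Integrated Veto computation: one must verify which candidate leaves the active set first and that $b$ indeed retains probability close to $\tfrac12$. The tie-breaking issue in the metric is not an obstacle, since ties are permitted by the definition of $d\models P$. If one preferred strict preferences, the voters at the midpoint could be shifted by $\delta$ toward $a$, and letting $\delta\to0$ gives the same limit.
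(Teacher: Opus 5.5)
Your proof is correct, and it uses a different instance from the paper's.

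\textbf{Your construction.} You use only two candidates. Maximal Lotteries deterministically pick the bare-majority candidate $a$, whose cost ratio tends to $3$. Integrated Veto splits almost evenly between $a$ and the optimal $b$, so its ratio tends to $2$. The equal mixture therefore tends to $5/2$.

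\textbf{The paper's construction.} It uses $r+2$ candidates $a,b,c_1,\dots,c_r$, and the roles of the two components are reversed. Every voter's top choice is some $c_k$, so Integrated Veto puts all its mass on the $c_k$, each with ratio $3-\tfrac2r$. Maximal Lotteries select the strict Condorcet winner $b$, whose ratio is $2-2\varepsilon$. The paper also needs a shortest-path-closure argument to check that its metric is valid; your explicit embedding on a path of length $2$ avoids this.

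\textbf{Link to the upper-bound proof.} The two instances are tight at different points of the inequality $\tfrac12 q^2+\tfrac12\min\{q,\tfrac12\}\le\tfrac34 q$ used in \Cref{thm:main}. This inequality holds with equality at $q\in\{0,\tfrac12,1\}$.
\begin{itemize}[leftmargin=*]
\item Your instance is tight at $q(I)=\tfrac12$ for $I=\{b\}$. There both \Cref{lem:IVbound} and \Cref{lem:MLbound} become asymptotically tight: $\ell_{\{b\}}(\IV(P))\to\tfrac14=\lim\plu(\{a\})^2$ and $\ell_{\{b\}}(\ML(P))\to\tfrac12$.
\item The paper's instance is tight at $q(I)=1$ for $I=\{a\}$.
\end{itemize}

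\textbf{What each buys.} Your instance is more economical and shows that the bound $5/2$ is already attained with two candidates. The paper's instance shows that Integrated Veto alone can have distortion approaching $3$. Yours shows the same for Maximal Lotteries alone. Together they explain why neither component suffices on its own.
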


\begin{proof}
Fix $\varepsilon\in(0,\tfrac12)$ and $r\ge3$. Then, we define the following profile $P$ with $C=\{a,b,c_1,\dots,c_r\}$.
For each $k\in[r]$, a mass $(1/2+\varepsilon)/r$ of voters ranks
$c_k\succ b\succ a\succ C\setminus\{a,b,c_k\}$ and a mass $(1/2-\varepsilon)/r$
ranks $c_k\succ a\succ b\succ C\setminus\{a,b,c_k\}$, where the ordering of 
$C\setminus\{a,b,c_k\}$ is arbitrary.

Define $p = \IV(P)$.
Only the $c_j$ have positive plurality score, so $A_0=\{c_1,\dots,c_r\}$,
$p(a)=p(b)=0$ and $\sum_{j}p(c_j)=1$.  Since $s_{b\succ c_j}=1-1/r>\tfrac12$
and $s_{b\succ a}=\tfrac12+\varepsilon>\tfrac12$, candidate $b$ is a strict
Condorcet winner, so $\ML(P)=\delta_b$ and
$\MIV(P)=\frac{1}{2} \delta_b+\frac{1}{2} p$. 

Consider the metric $d$ with $d(v,a)=1$ for every voter $v$, and, for $c\neq a$,
$d(v,c)=1$ if $c\succ_v a$ and $d(v,c)=3$ otherwise; all remaining distances are
given by the shortest-path closure. 
Observe that $d$ is consistent with the profile and the closure leaves the distances above
unchanged: any path from a voter to a candidate other than the direct edge
alternates between voters and candidates, hence uses at least three edges, each
of length at least $1$.
A voter ranks $c_j$ above $a$ exactly
when $c_j$ is her top choice, and ranks $b$ above $a$ exactly when she is one of
the $(1/2+\varepsilon)$ voters, so
\[
  \SC(a)=1,\qquad \SC(c_j)=3-\tfrac2r \qquad(\text{for } j \in [r]),\qquad \SC(b)=2-2\varepsilon .
\]
As $\varepsilon<\tfrac12$ and $r\ge3$, candidate $a$ minimizes social cost, and
\[
  \dist(\MIV(P),P) \ge \frac{\tfrac12\SC(b)+\tfrac{1}{2}\sum_j p(c_j)\SC(c_j)}{\SC(a)}
  \ = \tfrac12(2-2\varepsilon)+\tfrac{1}{2}\bigl(3-\tfrac{2}{r}\bigr)
  \ = \tfrac{5}{2}-\tfrac{1}{r}-\varepsilon .
\]
Since $\dist(\MIV(P),P) \leq \dist(\MIV)$ for every  $P$, letting $r\to\infty$ and $\varepsilon\to0$
gives the claim.
\end{proof}

Together with \Cref{thm:main}, this shows that $\dist(\MIV)=5/2$.

\section{Conclusion}
We introduced Integrated Veto and showed that its equal mixture with Maximal
Lotteries has metric distortion exactly $5/2$, improving the previous upper
bound of $2.75271$.  \citet{CRWW24} ask whether Plurality Veto and its variants
could be used to improve their bound; Lemma~\ref{lem:IVbound} is one answer.

One direction to improve this bound further is to let the mixing weight depend on the profile, which our lower bound does not rule out. Another  straightforward approach is to consider mixtures of more than two rules.

\section*{Usage of AI}
Throughout the research process of finding a rule with low metric distortion, we have used GPT 5.6-Sol. After a long discussion on another approach to achieve low metric distortion it suggested the mixed integrated veto rule and its connection to maximal lotteries. Claude Opus 5.0 was used to support the write up of the paper.

\section*{Acknowledgments}
Fabian Frank is supported by the Deutsche Forschungsgemeinschaft under grant BR 2312/14-1 and would like to thank Jannik Peters and Satyanand Rammohan for feedback given on the presentation of this result.

\end{document}